% ****** Start of file apssamp.tex ******
%
%   This file is part of the APS files in the REVTeX 4.1 distribution.
%   Version 4.1r of REVTeX, August 2010
%
%   Copyright (c) 2009, 2010 The American Physical Society.
%
%   See the REVTeX 4 README file for restrictions and more information.
%
% TeX'ing this file requires that you have AMS-LaTeX 2.0 installed
% as well as the rest of the prerequisites for REVTeX 4.1
%
% See the REVTeX 4 README file
% It also requires running BibTeX. The commands are as follows:
%
%  1)  latex apssamp.tex
%  2)  bibtex apssamp
%  3)  latex apssamp.tex
%  4)  latex apssamp.tex
%
\documentclass[%
 %reprint,
%superscriptaddress,
%groupedaddress,
%unsortedaddress,
%runinaddress,
%frontmatterverbose, 
%preprint,
%showpacs,preprintnumbers,
%nofootinbib,
%nobibnotes,
%bibnotes,
 amsmath,amssymb,
 aps,
 %prl,
 twocolumn,
pra,
%prb,
%rmp,
%prstab,
%prstper,
floatfix,
]{revtex4-2}

\usepackage{graphicx}% Include figure files
\graphicspath{{figures/}}
\usepackage{caption}
\usepackage{subcaption}
\usepackage{dcolumn}% Align table columns on decimal point
\usepackage{bm}% bold math
\usepackage{multirow}
\usepackage{xcolor}%added to get red text
\usepackage{hyperref}% add hypertext capabilities
%\usepackage[mathlines]{lineno}% Enable numbering of text and display math
%\linenumbers\relax % Commence numbering lines

%\usepackage[showframe,%Uncomment any one of the following lines to test 
%%scale=0.7, marginratio={1:1, 2:3}, ignoreall,% default settings
%%text={7in,10in},centering,
%%margin=1.5in,
%%total={6.5in,8.75in}, top=1.2in, left=0.9in, includefoot,
%%height=10in,a5paper,hmargin={3cm,0.8in},
%]{geometry}

\usepackage{amsthm}
\newtheorem{theorem}{Theorem}
\newtheorem{lemma}{Lemma}
\newcommand{\tr}{\operatorname{Tr}}

\begin{document}

%\preprint{APS/123-QED}

\title{Benchmarking Quantum Processor Performance at Scale}% Force line breaks with \\

\author{David C. McKay}
\email{dcmckay@us.ibm.com}
\author{Ian Hincks}
\author{Emily J. Pritchett}
\author{Malcolm Carroll}
\author{Luke C. G. Govia}
\author{Seth T. Merkel}
\affiliation{IBM Quantum}

\date{\today}% It is always \today, today,
             %  but any date may be explicitly specified

\begin{abstract}
As quantum processors grow, new performance benchmarks are required to capture the full quality of the devices at scale. While quantum volume is an excellent benchmark, it focuses on the highest quality subset of the device and so is unable to indicate the average performance over a large number of connected qubits. Furthermore, it is a discrete pass/fail and so is not reflective of continuous improvements in hardware nor does it provide quantitative direction to large-scale algorithms. For example, there may be value in error mitigated Hamiltonian simulation at scale with devices unable to pass strict quantum volume tests. Here we discuss a scalable benchmark which measures the fidelity of a connecting set of two-qubit gates over $N$ qubits by measuring gate errors using simultaneous direct randomized benchmarking in disjoint layers. Our layer fidelity can be easily related to algorithmic run time, via $\gamma$ defined in Ref.~\cite{berg2022probabilistic} that can be used to estimate the number of circuits required for error mitigation. The protocol is efficient and obtains all the pair rates in the layered structure. Compared to regular (isolated) RB this approach is sensitive to crosstalk. As an example we measure a $N=80~(100)$ qubit layer fidelity on a 127 qubit fixed-coupling ``Eagle'' processor (ibm\_sherbrooke) of 0.26(0.19) and on the 133 qubit tunable-coupling ``Heron'' processor  (ibm\_montecarlo) of 0.61(0.26). This can easily be expressed as a layer size independent quantity, error per layered gate (EPLG), which is here $1.7\times10^{-2}(1.7\times10^{-2})$ for ibm\_sherbrooke and $6.2\times10^{-3}(1.2\times10^{-2})$ for ibm\_montecarlo. 
\end{abstract}

\maketitle

%\tableofcontents

The development of quantum benchmarks enables improvements to be tracked across devices and technologies so that reasonable inferences on performance can be made. In Ref.~\cite{amico:2023}, some properties of quantum benchmarks were discussed, and that a suite of benchmarks should be designed to address quality, speed and scale, altogether describing performance. There are few suggested speed benchmarks besides CLOPS~\cite{wack2021quality}; however, for quality and scale there are many proposals. Generally, the quality is signified by having high fidelity gates (the underlying operations of the device) over a large set of connected qubits with low crosstalk. The size of the set is a benchmark of scale. Such quality can be measured discretely, by individually benchmarking the gate, or holistically, e.g.,~by running large representative circuits with well known outputs. 

Individual gate quality is typically measured by variants of randomized benchmarking~\cite{magesan:2012,Helsen22} (RB). For RB, one selects a random sequence of Clifford gates, constructs a circuit by appending the inverse of the sequence (also a Clifford), decomposes this circuit into the native gate set of a device, and then runs the circuit on said device. The decay of the measured polarization (of any Pauli-$Z$ operator) versus sequence length averaged over many random sequences is straightforwardly related to the average gate error. Because these sequence lengths can be very deep, small errors can be measured that are not dependent on state-preparation and measurement (unlike tomography). Measured in this way, we obtain fine-grained information about the device since we have error rates on each discrete gate element. However, important features of the noise can be missed depending on the way these are measured. Specifically, in a connected device, if we measure isolated two-qubit (2Q) gate pairs using RB, we potentially overlook crosstalk terms. This issue was addressed in the simultaneous RB protocol~\cite{gambetta:2012}, yet there are still ambiguities in the implementation. 

Conversely, running test algorithms/structured circuits can give a holistic view of gate quality; however, it is very specific to the type of circuits selected. There have been proposed families of circuits as benchmarks \cite{lubinski2021application, mesman2021qpack, finvzgar2022quark, tomesh2022supermarq, zhang2023characterizing, kurlej2022benchmarking, lubinski2023optimization, kordzanganeh2022benchmarking, mundada2022experimental,li:2022}; however, it remains an open question how to connect performance on one such benchmark to another. As such, benchmarks based on randomized circuits, such as quantum volume (QV)~\cite{cross2019validating}, cross-entropy benchmarking (XEB)~\cite{boixo:2018}, mirror RB~\cite{proctor:2022}, and inverse-free (binary) RB~\cite{hines:2023}, are often believed to give a better overview of average performance. In particular, QV is a stringent test of the device which is defined as QV$=2^N$ when some subset of $N$ qubits can pass the QV test -- to measure a heavy output probability greater than $2/3^{\mathrm{rd}}$ for circuits with $N$ random, all-to-all connected, SU(4) layers. It is straightforward to compare QV across different qubit technologies, and its performance has been linked to the performance of quantum error correcting codes~\cite{baldwin:2022}. 

However, as with all benchmarks, there are limits to QV. For one, it reports on the performance of the best subset of qubits on a device; it is a “high-flier” benchmark. For devices with more qubits than log$_2$(QV), QV is not a good representative number of overall quality. For example, in superconducting qubits the largest quantum volume is 512 (9 qubits)~\cite{ibmqv512} and in ion traps 524288 (19 qubits)~\cite{quantinuum_qv}. Yet, there are devices being constructed at scales far beyond these thresholds and so QV is not capturing quality across the full scale of the device. Secondly, QV (and also XEB) requires classical computation of the circuits, and so is limited to scales where that is tractable -- generally thought to be about 50 qubits and 50 layers of gates (see recent review Ref.~\cite{xiaosi:2023} and $53\times20$ simulation on advanced high-performance computing (HPC)~\cite{yong:2022}). And thirdly, as a discrete benchmark, QV does not indicate continuous changes in gate improvement. Finally, QV measures a specific type of unstructured square circuit; however, many near term algorithms, such as the variational quantum eigensolver (VQE), quantum approximate optimization algorithm (QAOA), and Trotterized dynamics~(see, e.g., Ref~\cite{dalzell:2023} for a review), are based on the idea of a repetitive layer of gates. Similarly, quantum error correction (QEC) relies on a repetitive structure of the application of parallel gates and measurements to perform code checks and detect errors (see, e.g., Ref.~\cite{liepelt:2023} for a QEC inspired benchmark). 

Layered circuits lend themselves well to applying the techniques of error migitation~\cite{berg2022probabilistic, ferracin:2022}.  Error mitigation is a post processing technique that makes a tradeoff with speed to improve quality, i.e., by running more instances of the circuit with different noise profiles to purify the final result. Therefore, there are compelling reasons to provide a benchmark that spans across an entire device via layered circuits and which reveals continuous information as a complement to QV. While XEB, mirror RB and binary RB can probe layered circuits, they require high-weight measurements that do not reveal information about individual gates. Furthermore, XEB has similar classical computational limitations as QV and the output fidelity can be optimized over any N-qubit unitary in each layer. This adds flexibility to XEB, but makes device to device and application to application comparisons difficult. 

To address these points, we propose an alternative benchmark called layer fidelity ($LF$), which combines the ideas of simultaneous~\cite{gambetta:2012} and direct~\cite{proctor:2019} randomized benchmarking and is summarized graphically in Fig.~\ref{fig:lf_schematic}. For a given fully connected set of 2Q gates, we partition them into $M$ layers where the 2Q gates are disjoint. When in disjoint layers, we can construct simultaneous direct randomized benchmarking sequences for these gates with alignment barriers and measure individual 1Q and 2Q fidelities. From these disjoint fidelities we can use the product to estimate the full layer fidelity over $N$ qubits. Given this measurement we have enough information to estimate the layer fidelity of all embedded layers of size $<N$.  To normalize to a size-independent quantity, we introduce error per layered gate (EPLG), $\mathrm{EPLG}=1-LF^{1/n_{2Q}}$ where $n_{2Q}$ is the number of two-qubit gates (typically $N-1$ for a linear chain of qubits), which is representative of the process error of a gate in these layered circuits. A similar quantity, the dressed two-qubit pauli error (measured from XEB), was defined in Ref.~\cite{morvan:2023}. Lending support to the $LF$, Ref.~\cite{morvan:2023} shows a threshold between a quantity similar to $LF$ and the ability to classically simulate random circuits with layered structure. 

We will discuss the full algorithm in \S~\ref{sect:protocol} and show data on two IBM devices (127 qubit and 133 qubit) in \S~\ref{sect:data}. In contrast to other protocols that Pauli-twirl a repeated layer~\cite{berg2022probabilistic,erhard:2019,helsen:2019,kimmel:2014,carignan-dugas:2023}, the procedure for calculating $LF$ requires fewer circuits. However, we can still relate $LF$ to a mitigation metric under most conditions, $\gamma=1/LF^2$; $\gamma$ links noise to the number of probabalistic error cancellation circuits~\cite{berg2022probabilistic} required for a depth $\delta$ circuit, $O(\gamma^{2 \delta})$~\footnote{$\delta$ here is the number of repeated full layers which are used to compute $\gamma$. If $\delta$ is the traditionally defined circuit depth, $\gamma$ is a geometric mean over the disjoint layers and is computed from $LF$ as such.}. We show data comparing LF and $\gamma$ in \S~\ref{sect:add_data} and further discussion of the bounds is in \S~\ref{sect:gamma}. $LF$ is similarly linked to the quantity measured by mirror RB~\cite{proctor:2022}; we show data comparing $LF$ and mirror RB in \S~\ref{sect:add_data}, and we compare simulations of RB, $LF$ and mirror in \S~\ref{sect:sims}.

\section{Layer Fidelity Protocol \label{sect:protocol}}

\begin{figure*}[ht!]
     \centering
         \includegraphics[width=0.85\textwidth]{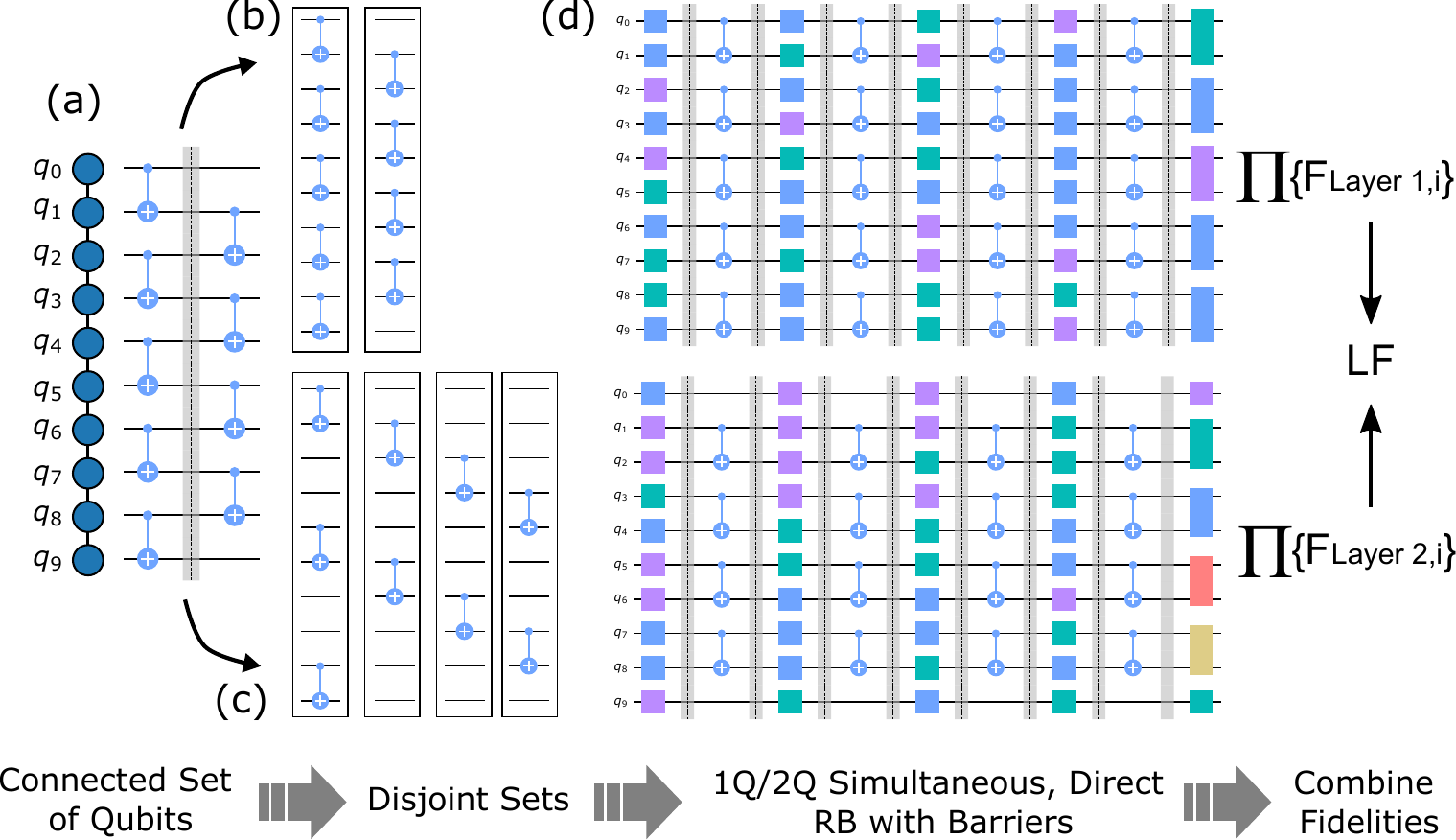}
     \hfill
        \caption{(a) Here we consider a linear chain of qubits with nearest neighbor coupling for which a connecting set of gates is comprised of a disjoint layer of gates starting on qubit 0 followed by a disjoint layer of gates starting on qubit 1. The disjoint layers (b) can either be the maximally simultaneous sets, but could alternatively be a more sparse set (c) split into more disjoint layers. (d) For the disjoint layer set of (b) this requires two simultaneous direct RB experiments here shown for depth $l=4$ with the last layer the inverses in each disjoint space. We measure decay curves as a function of $l$  and fit to extract the fidelities, which are then multiplied together as given by Eqn.~\ref{eqn:lf} to obtain the layer fidelity. }
        \label{fig:lf_schematic}
\end{figure*}

An overview of the protocol is visualized in Fig.~\ref{fig:lf_schematic}, and here we outline the steps of the protocol,
\begin{enumerate}
\item Select a set of $N$ qubits $\{q_i\}$ with a connected set of Clifford two qubit gates $U=\{U_{ij}\}$ (e.g., CNOT) such that the set of two qubit gates plus arbitrary single-qubit gates define a universal gate set over $\{q_i\}$. Part (a) of Fig.~\ref{fig:lf_schematic}.
\item Split the full layer into $M$ disjoint layers with $\{U_{ij}\}_m$ such that $U=\sum_{m}^{M} \{U_{ij}\}_m $ where $\{U_{ij}\}_m$ have no overlapping qubits. The set of idle qubits are $\{q_i\}_m$. Example disjoint layers shown in (b) and (c) of Fig.~\ref{fig:lf_schematic}. 
\item Measure the errors on $\{U_{ij}\}_m$ and $\{q_i\}_m$ in the disjoint layers using simultaneous direct randomized benchmarking sequences, (d) of Fig.~\ref{fig:lf_schematic}. 
\item From each measured decay we obtain a process fidelity $F_i=\frac{1+(d^2-1)\alpha}{d^2}$ where $d$ is the dimension of the decay space ($d=2$ for 1Q, $d=4$ for 2Q) and $\alpha$ is the RB decay rate.  The layer fidelity per disjoint layer is
\begin{equation}
LF_m = \prod_{j} F_{j,m}, \label{eqn:disjoint_lf}
\end{equation}
and the full layer fidelity is
\begin{equation}
LF = \prod_{m}^{M} LF_m. \label{eqn:lf}
\end{equation}
We define a normalized quantity, the error per layered gate,
\begin{equation}
    \mathrm{EPLG} = 1-LF^{1/n_{2q}} \label{eqn:eplg},
\end{equation}
where $n_{2q}$ is the number of 2Q gates in all the layers, e.g., $N-1$ for the minimal set of connected gates
\end{enumerate}

There are a few considerations for the protocol:
\begin{itemize}
    \item  $\{U_{ij}\}$ are typical two-qubit gates such as CNOT, CZ, and iSWAP and variations from those that differ by single qubit gates, e.g., ECR ($e^{-i\frac{\pi}{4}ZX}$).
    \item There is no unique decomposition of disjoint layers, but the error of all qubits must be measured, including idle qubits, i.e., qubits without a two-qubit gate in that disjoint layer. We show some data comparing different disjoint layer decompositions in \S~\ref{sect:add_data}.
    \item A requirement of the protocol is that barriers must be enforced at the layer of two-qubit gates (all gates before the barrier must complete before the circuit can proceed). That is, we apply a set of randomizing single qubit Clifford gates on all qubits, a barrier across all qubits, the layer of  disjoint two-qubit gates, then another barrier, and repeat this $l$ times. At the end, we invert each disjoint set, and measure the ground state population of each akin to simultaneous randomized benchmarking~\cite{gambetta:2012}. Since the sub-layers are disjoint there is no mixing and we get well-defined decay curves of the ground state population versus $l$. The use of barriers keeps each layer consistent with how it would appear in the full (i.e. non-disjoint) layer.
    \item Dynamic decoupling is allowed.
    \item $LF$ makes a Markovianity assumption and is a benchmark insensitive to state preparation and measurement (SPAM) errors; however, the contribution of measurement error can be trivially added by taking the product of the measurement assignment fidelities.
    \item The layer fidelity of a device for $N$ qubits is defined as the maximum layer fidelity measured on the device (practical considerations are discussed in \S~\ref{sect:data}).
\end{itemize}

The goal of the protocol is to measure the fidelity of the full layer defined in the first step of the protocol, for example, (a) of Fig.~\ref{fig:lf_schematic}. Formally, the fidelity of that layer is the trace of the Pauli Transfer Matrix (PTM) between the noisy experimental map and the inverse ideal map,
\begin{equation}
F = \mathrm{Tr}(R_{\mathrm{ideal}}^{-1} R_{\mathrm{exp}})/d^2, \label{eqn:Fexact}
\end{equation}
where $R_{ij} = \mathrm{Tr}(P_i \Lambda[P_j])/d$, $P_i$ are the Pauli matrices, and $\Lambda$ is the process map for the layer. In the limit of no crosstalk, Eqn.~\ref{eqn:disjoint_lf} is exact, but Eqn.~\ref{eqn:lf} is not because the product of traces is not the trace of the product; however, for small errors this is a good approximation (\S~\ref{sect:pf}) and is a lower bound. This is also true for the layer fidelity: after $j$ repetitions of the layer, $LF^{j}$ is approximately the true fidelity until $LF^{j}$ gets small. With crosstalk, Eqn.~\ref{eqn:lf} and Eqn.~\ref{eqn:Fexact} are not identical and for specific crosstalk terms (see \S~\ref{sect:xtalk}) the layer fidelity will be a \emph{lower bound} (the crosstalk error terms are double counted). Because a general treatment of all cases is not possible, we turn to numerics (\S~\ref{sect:sims}) with various noise models. We compare layer fidelity to theory (Eqn.~\ref{eqn:Fexact}) and to the fidelity measured from mirror RB~\cite{proctor:2022}, which is a protocol to measure the layer fidelity by building a circuit of $l/2$ layers to which the reverse circuit is appended, and the polarization of the output is measured versus $l$. 

The advantage of mirror circuit RB is that it does capture all crosstalk terms; however with two distinct disadvantages compared to layer. First, with layer fidelity we obtain more information: a detailed set of error rates for each $\{U_{ij}\}_m$ and $\{q_i\}_m$. Second, the signal to noise of layer fidelity is higher since we are measuring the individual error rates versus the error rate of the entire layer (a weight-$n$ measurement). Any protocol that requires the estimation of high-weight observables, which $LF$ avoids, is unscalable because, with enough qubits, the signal will be unmeasurably small even for short protocol depths. Overall, the numerics support the assertion that layer fidelity is capturing the majority of the crosstalk terms for realistic noise models, and layer fidelity and mirror RB agree well in an experimental test (\S~\ref{sect:add_data}). By fitting all decay terms~\cite{mckay:2020,harper:2020} available to us in our layer fidelity benchmark, we could properly better account for some of these crosstalk terms. However, the added complexity, exposure to measurement errors, and loss of signal-to-noise need more careful consideration.  As an aside, the advantages shown here for layer fidelity over mirror benchmarking should also hold for binary RB \cite{hines:2023} as well. 

As mentioned, one advantage of the layer fidelity protocol is that it gives access to the discrete fidelities of the underlying gates. One utility of this is that we can easily measure the layer fidelity on smaller subsets of the measured set $\{q_i\}$. We can simply calculate the smaller subset by omitting qubits outside the set in the calculation of $LF$, i.e., change the indices of Eqn.~\ref{eqn:disjoint_lf}. A gate may extend outside the new subset, so we assume that the gate fidelity is shared equally between those subsets and calculate the fidelity of that qubit in the layer as $F^{1/2}$. In most geometries the layer fidelity is optimally measured on a long 1D line of qubits (chain of qubits) as this only requires two disjoint layers, i.e., the gates first starting at $Q_0$ (even set) and then at $Q_1$ (odd set) as shown in (b) of Fig.~\ref{fig:lf_schematic}. When defined on a line, measuring subsets is particularly straightforward as it is a sliding window of qubits inside the larger 1D chain.  Although it is not guaranteed that we find the optimal value of layer fidelity using this subspace method, this can be used as a lower bound. While the line is the densest application of gates possible, based on the definition set forth, the gates can be measured over more disjoint layers, so long as idle qubit errors are accounted for; we show data in \S~\ref{sect:add_data} that splitting over more layers is worse due to the increased duration. In certain geometries, such as a star, more disjoint layers will necessarily be required: this problem is equivalent to constructing an edge-coloring of the coupling graph, and Vizing's theorem guarantees that we require no more distinct layers than the degree of the graph plus one.

Layer fidelity can be easily related to other metrics which quantify the error models on a layer, such as $\gamma$~\cite{berg2022probabilistic}, which is defined as 
\begin{equation}
    \gamma = e^{\sum_{k \in K} 2\lambda_k} \label{eqn:gamma1},
\end{equation}
where $\lambda_k > 0$ are the Pauli generator terms in the Lindblad model of the Pauli-twirled noise. While in general all Pauli-twirled error terms exist in Eqn.~\ref{eqn:gamma1}, approximations are made to make the calculation tractable, for example, in Ref.~\cite{berg2022probabilistic} the terms are truncated to all physical connections in the device, and Pauli benchmarking is required to learn them. Even still, this requires many more circuits than are required to measure layer fidelity. The two quantities are easily related for well-behaved noise; for depolarizing noise $\gamma$ is given by, 
\begin{eqnarray}
    \gamma_D & = & \prod_i \left(\frac{16 \times F_i-1}{15}\right)^{-15/8} \nonumber \\
    & = & \prod_i \alpha_i^{-15/8}  \label{eqn:gamma2}
\end{eqnarray} 
which in the limit of $\alpha$ close to 1 is,
\begin{equation}
    \gamma = \frac{1}{LF^2}. \label{eqn:lf_to_gamma}
\end{equation}
We derive this and discuss the bounds in \S~\ref{sect:gamma}, and we show some data comparing $\gamma$ and $LF$ in \S~\ref{sect:add_data}. Note that this $\gamma$ is defined over the disjoint layers used to measure layer fidelity, which are at least depth 2. To estimate $\gamma$ on a $N_{\gamma}$ qubit ($N_{\gamma}$ even), depth 1 layer ($\delta=1$), we can use EPLG (Eqn.~\ref{eqn:eplg}),
\begin{equation}
    \gamma_{\delta=1} = (1-\mathrm{EPLG})^{-N_{\gamma}},
\end{equation}
and 
\begin{eqnarray}
    \bar{\gamma}_{\delta=1} = \gamma_{\delta=1}^{2/N_{\gamma}} = (1-\mathrm{EPLG})^{-2}.
\end{eqnarray}
The accuracy of this estimate will depend on the the layer structures being similar between the layer for $\bar{\gamma}$ and the layer used to measure $LF$/EPLG. If $\gamma_{\delta=1}$ is defined on the same disjoint layer as used for the layer fidelity measurement, then it can also be calculated directly from the disjoint layer fidelity Eqn.~\ref{eqn:disjoint_lf}. 

\section{Data \label{sect:data}}

\begin{figure*}[ht!!]
     \centering
     \begin{tabular}{m{0.95\columnwidth}m{0.95\columnwidth}}
      \includegraphics[width=0.9\columnwidth]{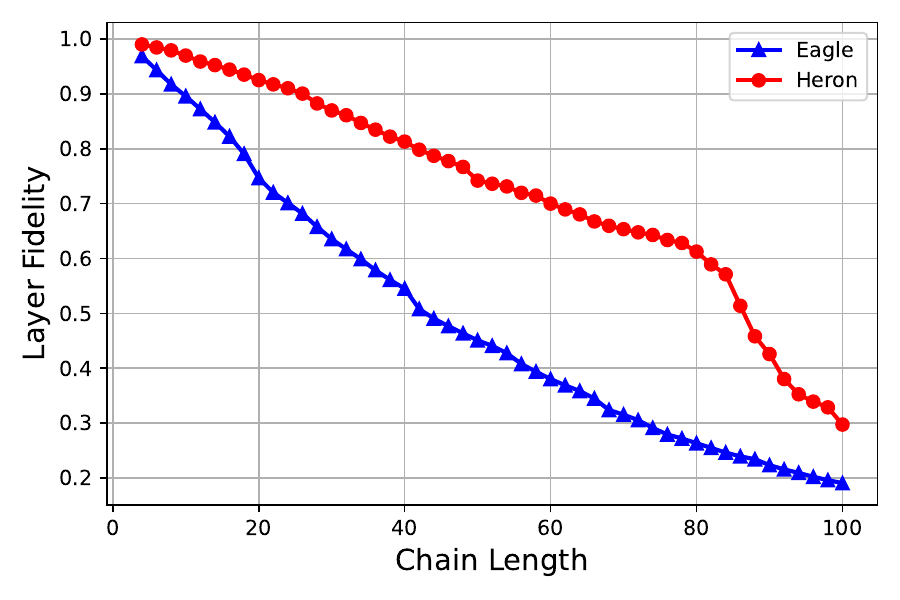} &   \includegraphics[width=0.9\columnwidth]{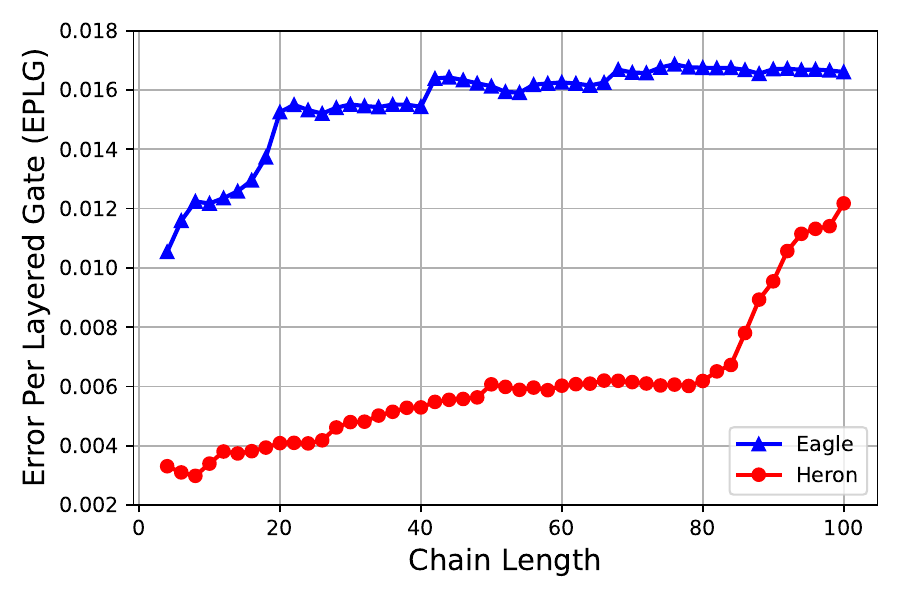} \\
     \includegraphics[width=0.95\columnwidth]{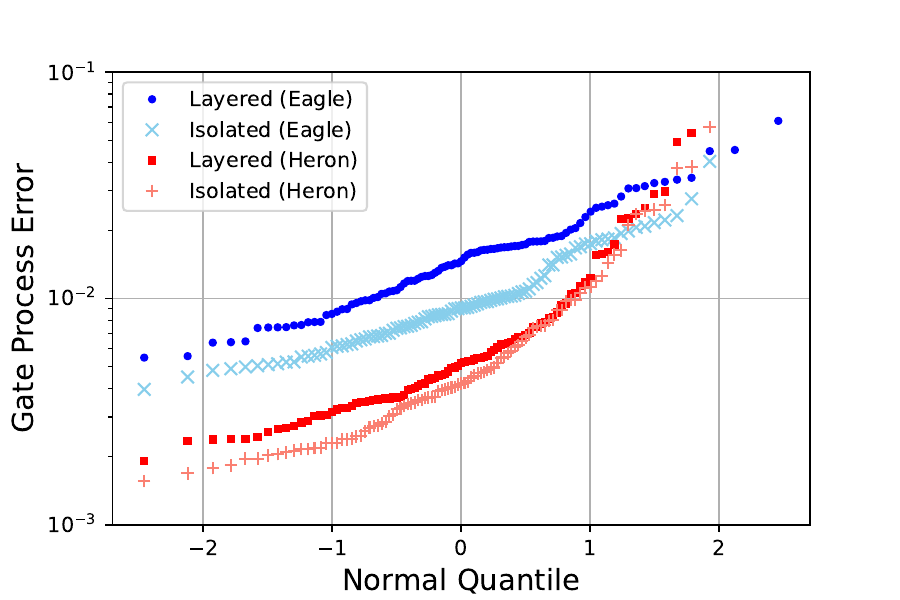} & 
     \raisebox{0.3in}{\includegraphics[width=0.95\columnwidth]{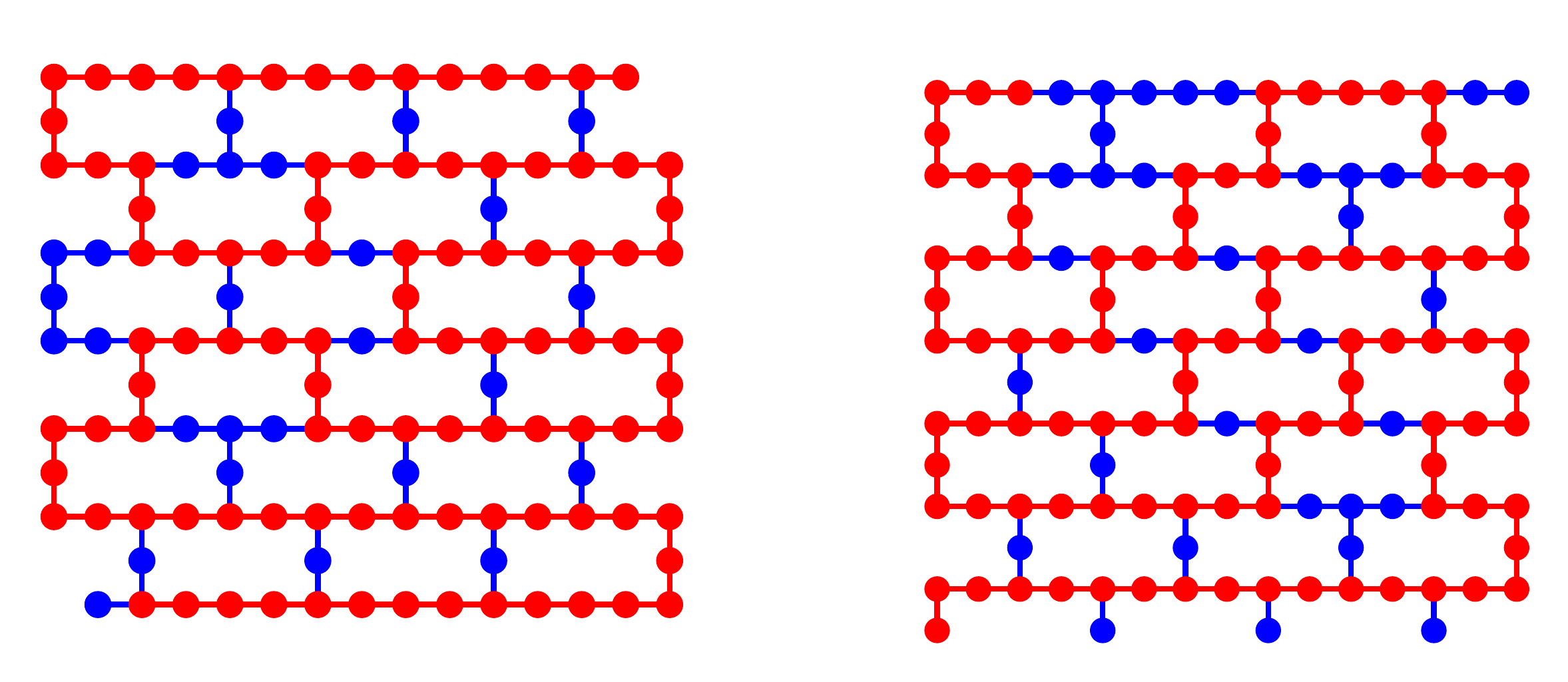} }
     \end{tabular}
         
     \hfill
        \caption{(Top Left) Layer fidelity for the 127 qubit ibm\_sherbrooke ``Eagle'' processor (blue triangles) and the 133 qubit ibm\_montecarlo ``Heron'' processor (red circles) taken using the procedure outlined in the main text for various chain lengths up to 100 qubits. (Top Right) The same data converted to error per layered gate (EPLG). (Bottom Left) Quantile plot of the individual gate errors measured from the best 100 qubit chain from simultaneous direct RB (``layered'') versus the backend reported gate errors (``isolated'') on the same chain. Errors are reported as process error ($\epsilon_p$) as opposed to average gate error ($\epsilon_g$) where $\epsilon_p=\frac{d+1}{d} \epsilon_g$. Both devices have among the lowest gate error measured on a superconducting device, noting the minimum isolated gate error (process error) on ibm\_sherbrooke (Eagle) of $3.2(4.0) \times 10^{-3}$ and on ibm\_montecarlo (Heron) of $1.2(1.6) \times 10^{-3}$  (Bottom Right) The 100 qubit chain (red) overlaid on the ibm\_sherbrooke (left) and ibm\_montecarlo (right) device layout schematics. }
        \label{fig:lf_data}
\end{figure*}

As mentioned, the layer fidelity for a subset of $N$ qubits on a device ($LF_{N}$) is defined as the maximum layer fidelity over all K ($N$-qubit)  subsets. Practically it will be impossible to measure all sets on a large device, for example the number of length 100 chains on a 127Q heavy hex device such as ibm\_sherbrooke is 313,980. Therefore, in practice we need heuristic methods to measure the optimal layer fidelity. Initial estimates of the layer fidelity can be made with the isolated two-qubit fidelities~\cite{nation2022suppressing} and from there candidate sets can be measured. One of the bigger considerations here is that the layer fidelity imposes a fixed length on all gates of the disjoint layer equal to the longest gate (see the simulations in the appendix \S~\ref{sect:sims}), and so the estimates from isolated RB fidelities must take that into consideration. Typically, this is done by omitting edges of the graph with gates that are much longer than the average. We use a heuristic protocol given by the following procedure:

\begin{enumerate}
    \item Assuming a list of gate errors measured from isolated RB is available, calculate the layer fidelity for each $N_{max}$ qubit linear string, where $N_{max}$ is selected to be at least the length of the longest desired chain. In this step long gates may be omitted from the graph as they are known to make the layer fidelity much worse. Find the set with the highest predicted $LF$ (set 1), then find the set with the least overlap with set 1 and the highest predicted $LF$ of that subset (set 2). Repeat this again to find a third set.
    \item Measure the errors from simultaneous direct RB (described in Fig.~\ref{fig:lf_schematic}) in those 3 sets (at least 6 disjoint layers for 1D chains) and calculate the $LF$ (Eqn.~\ref{eqn:lf}) from the measured data for each $N<N_{max}$ by looking at subchains within the sets.
    \item For each value of $N$ take the largest $LF$ from all the subchains measured. For example, if $N_{max}=100$ and $N=50$ then there are 150 possible sub-chains.
    \item Plot $LF$ vs $N$, convert to error per layered gate (EPLG) as $\mathrm{EPLG}=1-LF^{1/n_{2q}}$ where $n_{2q}$ is the number of 2Q gates.
    \item Since this covers a heuristic number of chains, more chains at different lengths can be measured ``ad-hoc'' and if the $LF$ of those chains is larger, they will supplant the previously measured values.
\end{enumerate}

We show typical data taken on a 127 qubit ``Eagle'' processor ibm\_sherbrooke (native two-qubit CX gate using cross-resonance) and 133 qubit ``Heron'' processor ibm\_montecarlo (native two-qubit CZ gate using tunable-coupler actuation) in Fig.~\ref{fig:lf_data}. To measure the fidelities we perform the simultaneous direct RB sequences described previously with 300 shots per circuit, 6 randomizations and $l=$~[1, 10, 20, 30, 40, 60, 80, 100, 125, 150, 200, 400] (Eagle) and $l=$~[1, 10, 20, 30, 40, 60, 80, 100, 125, 150, 200, 400, 750] (Heron). For ibm\_sherbrooke most gate lengths are 533~ns, but as described in the heuristics for picking the gate sets, three edges with gate lengths $>$700~ns were removed from consideration. For ibm\_montecarlo the gate lengths were about an equal mix of 84~ns and 104~ns and none were removed from consideration due to gate length. All circuits were generated in Qiskit and run through the IBM Quantum cloud interface. In the top plot we show the layer fidelity and the error per layered gate as a function of chain length. The chain of qubits on the ibm\_sherbrooke and ibm\_montecarlo devices for the $N=100$ data is shown in red on the bottom right plot. 

One of the advantages of this layer fidelity measurement is the access to individual gate errors which can be used for further analysis or fidelity estimates. In particular, we can perform a comparison between isolated RB and the errors from layer fidelity RB which can be a proxy for crosstalk errors. We note here that the isolated RB data is, in fact, a variant of simultaneous RB where there is a distance of at least one idle qubit between all two qubit gate pairs and there are no barriers. As we show in simulations in the appendix \S~\ref{sect:sims} isolated RB trivially eliminates some crosstalk, such as always on ZZ between pairs of qubits for fixed coupling architectures. The data bears this out as the middle plot of Fig.~\ref{fig:lf_data} shows a distinct increase in the error per gate on the ``Eagle'' processor when run in layers versus isolated RB mode. Conversely, these errors are greatly alleviated in the ``Heron'' processor since the coupling between neighboring qubits can be turned off when not required for two-qubit gate operation.

\section{Conclusions}

In this manuscript we discussed a benchmark for quantum processors at scale - the layer fidelity. The layer fidelity follows naturally from standard randomized benchmarking procedures, is crosstalk aware, fast to measure over a large number of qubits, has high signal to noise and gives fine-grained information. We demonstrated the key components of the layer fidelity metric with measurements on the 127 qubit Eagle processor ibm\_sherbrooke and 133 qubit Heron processor ibm\_montecarlo. Using simulation and data we showed (\S~\ref{sect:add_data} and \S~\ref{sect:sims}) that there is good agreement with mirror randomized benchmarking - a complementary technique for measuring layers - over a number of error models. The layer fidelity links easily with other methods of characterizing layers, such as Pauli learning for $\gamma$. We leave a few open questions outside the scope of this manuscript, such as whether more advanced data fitting can improve agreement with the exact layer fidelity, the predictive power of $LF$ with differently structured circuits, the limits of twirling in $LF$, and how to extend to layered circuits with mid-circuit measurements. Finally, we note that, like all benchmarks layer fidelity should be considered as one piece of information towards full device characterization. 

\begin{acknowledgments}
The authors would like to thank Andrew Wack for helpful discussions, Kevin Krsulich for Qiskit help, Samantha Barron for help with the Pauli learning, and James Wootton and Blake Johnson for manuscript comments. Research was sponsored by the Army Research Office and was accomplished under Grant Number W911NF-21-1-0002. The views and conclusions contained in this document are those of the authors and should not be interpreted as representing the official policies, either expressed or implied, of the Army Research Office or the U.S. Government. The U.S. Government is authorized to reproduce and distribute reprints for Government purposes notwithstanding any copyright notation herein.
\end{acknowledgments}

\bibliography{main}

\begin{appendix}

\section{Additional Data \label{sect:add_data}}

Here we provide some additional support for layer fidelity by comparing to the mirror RB protocol~\cite{proctor:2022} which embeds the full layer directly in a mirror circuit. Our specific mirror circuit is comprised of a random 1Q Clifford layer, then the first disjoint layer of two-qubit gates, a second random 1Q Clifford layer, then the second disjoing layer of two-qubit gates. In this way the number of 1Q gates is the same between mirror RB and layer RB when constructing the full layer fidelity. We consider two versions of mirror, one which is a direct mirror (just the forward and reverse circuit) and the second version, more faithful to Ref.~\cite{proctor:2022}, includes a random Pauli layer between the forward and reverse circuits. We measure the polarization $S$ as defined in Ref.~\cite{proctor:2022}.  Our data comparison is on 20 qubits of the ibm\_peekskill device, which is a 27 qubit fixed-coupling ``Falcon'' processor and the path is shown in Fig.~\ref{fig:mirror_data}. We perform 10 randomizations and measure 2000 shots. We compare to layer fidelity data from 6 randomization and 300 shots. We can see that the agreement between mirror and layer is quite good and that there is some discrepancy between mirror with and without the Pauli layer. Since our error model in the device is fixed, we investigate the comparison of layer and mirror further with simulations in \S~\ref{sect:sims}. 

Another aspect of the layer protocol is that the choice of disjoint sets is not unique, as shown in Fig.~\ref{fig:lf_schematic} for 2 versus 4 layers of the chain. Therefore, we take data in the same set of qubits as above splitting the layer fidelity into 2, 4, 6 and 10 disjoint layers. Because of the increased total duration, more disjoint layers leads to lower fidelity. However, this statement will be architecture dependent; there are certain types of crosstalk terms that occur during simultaneous gates that are large enough to offset the longer duration of the circuit, e.g., this was probed on IBM devices in Ref.~\cite{murali2019noise}. In this case, splitting into more disjoint layers is a sensible approach. Another scenario is that the architecture does not allow more than a certain number of simultaneous gates at a time.

\begin{figure}[ht!]
     \centering
         \includegraphics[width=\columnwidth]{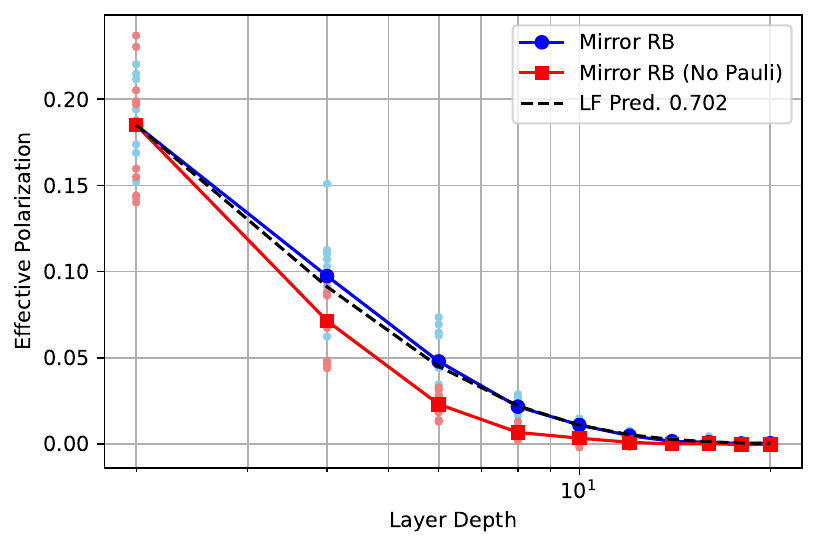}
         \includegraphics[width=\columnwidth]{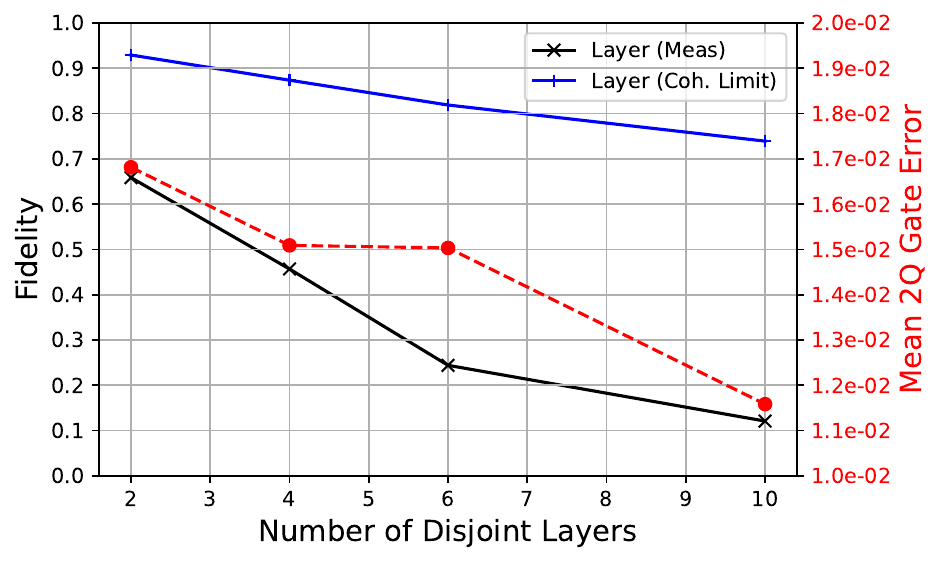}
         \includegraphics[width=0.75\columnwidth]{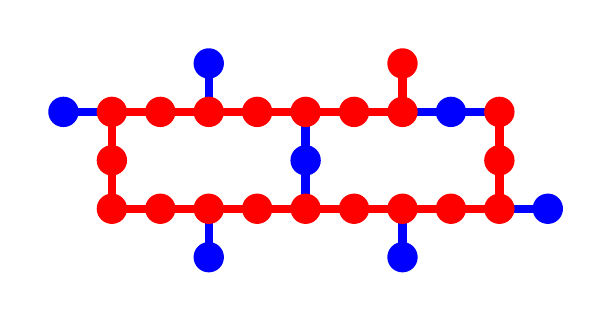}
     \hfill
        \caption{(Top) Comparing mirror RB data versus layer depth $l$ to the predicted decay from layer fidelity measured on the same set of qubits; for this data $LF=0.702$ and so the dashed line is $0.702^l$. The different mirror RB curves either include (blue) or do not include (red) a random Pauli layer. (Middle) The layer fidelity versus the number of disjoint layers used in the protocol (black). The fidelity decreases as the number of layers increases because the total duration is longer. We estimate this effect by just considering the fidelity decrease due to decoherence (blue). More layers does decrease the mean gate error (dashed, red) due to lower crosstalk, but overall this is not enough to improve because of the increased length. (Bottom) Qubits used on ibm\_peekskill are [23, 24, 25, 22, 19, 16, 14, 11, 8, 5, 3, 2, 1, 4, 7, 10, 12, 15, 18, 17].}
        \label{fig:mirror_data}
\end{figure}

In the main text we relate the layer fidelity to a quantity relevant for error mitigation, $\gamma$, defined in Eqn.~\ref{eqn:gamma1}. The relation is given in Eqn.~\ref{eqn:lf_to_gamma}, and although supported in theory by well behaved noise models (see \S~\ref{sect:gamma}), here we do an experimental comparison on a 16 qubit section of ibm\_peekskill. The layer fidelity data is taken according to the procedure outlined in the main text and the direct $\gamma$ data is taken according to the procedure in Ref.~\cite{berg2022probabilistic}. For the LF data we take 178 circuits (13 depth points $\times$ 6 randomizations $\times$ 2 disjoint layers) and for the $\gamma$ data we take 14,000 circuits (14 depth points $\times$ 1000 basis rotations $\times$ randomizations). This ratio of circuits demonstrates why layer fidelity is a quick method for estimating gamma. The data is shown in Fig.~\ref{fig:gamma_data} and the agreement is reasonable; a more comprehensive study including error bars and minimizing time variations of the device properties is left for a future study. 

\begin{figure}[ht!]
     \centering
         \includegraphics[width=\columnwidth]{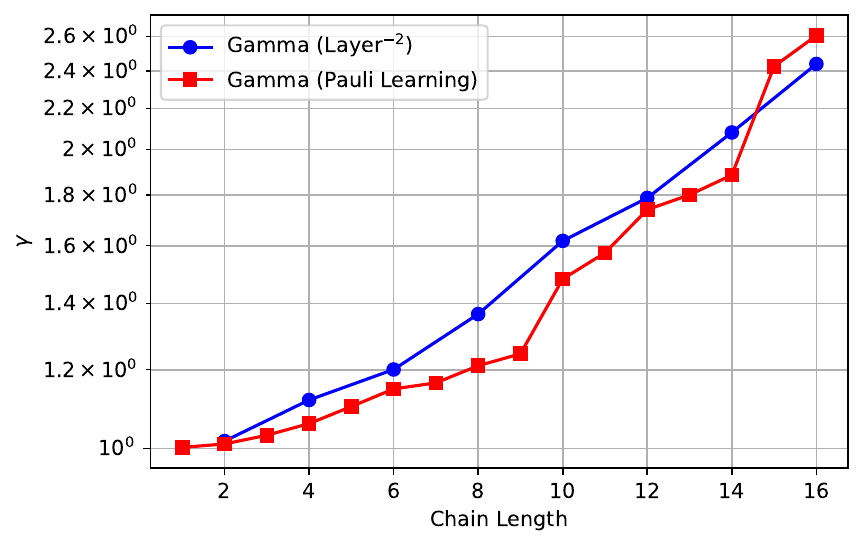}
     \hfill
        \caption{Comparing $\gamma$ measured from layer fidelity (blue, circles) and Eqn.~\ref{eqn:lf_to_gamma} to $\gamma$ measured using Pauli-learning~\cite{berg2022probabilistic} (red, squares). Measured on ibm\_peekskill for the connected set of qubits [19, 22, 25, 24, 23, 21, 18, 15, 12, 13, 14, 11, 8, 5, 3, 2] (even and odd disjoint layers).}
        \label{fig:gamma_data}
\end{figure}

\section{Simulations \label{sect:sims}}

Here we compare simulations between isolated RB, simultaneous RB, layer fidelity RB, and mirror RB with a variety of error models. The circuits are generated as gates (in the decomposition of $X90$, $X0$ [idle], $Rz$ [arbitrary Z rotations] and $CX$ gates) and then converted to a schedule based on the single-qubit gate being the smallest unit of time; the two-qubit gates are converted into fractional time steps of either 5 or 8 single-qubit gate times. Because the $Rz$ gates are zero time, they are considered their own gate slices, and so a finite time step can consist of the 3 possible gates on each qubit and so for four qubits there are roughly 81 unique four qubit unitaries to construct. We may add coherent error terms to each unitary, e.g., an overrotation or a $ZZ$ crosstalk. We then perform a density matrix simulation, where at each time step the unitary is applied followed by a discrete $T_1$/$T_2$ map on each qubit. Breaking the unitary and incoherent evolution into time steps is an implementation of Trotterized simulation of their concurrent evolution. We compare the error extracted from these sequences to the theoretical errors by adding the coherent~\cite{emerson:2005,pedersen:2007} and incoherent errors,
\begin{eqnarray}
\epsilon_{U} & = & 1-\frac{\textrm{Tr}\left[U_{ideal}^* U)\right]^2}{d^2} \label{eqn:coh_err} \\
\epsilon_{\Lambda} & = & 1-\prod_{i}\left(\frac{1}{4}+\frac{1}{2}e^{-t_g/T_{2,i}}+ \right. \nonumber \\
& & \left. \frac{1}{4}e^{-t_g/T_{1,i}}\right).
\end{eqnarray}
where $t_g$ is the gate length and both these errors are process errors. 

\begin{figure}[ht!]
     \centering
         \includegraphics[width=\columnwidth]{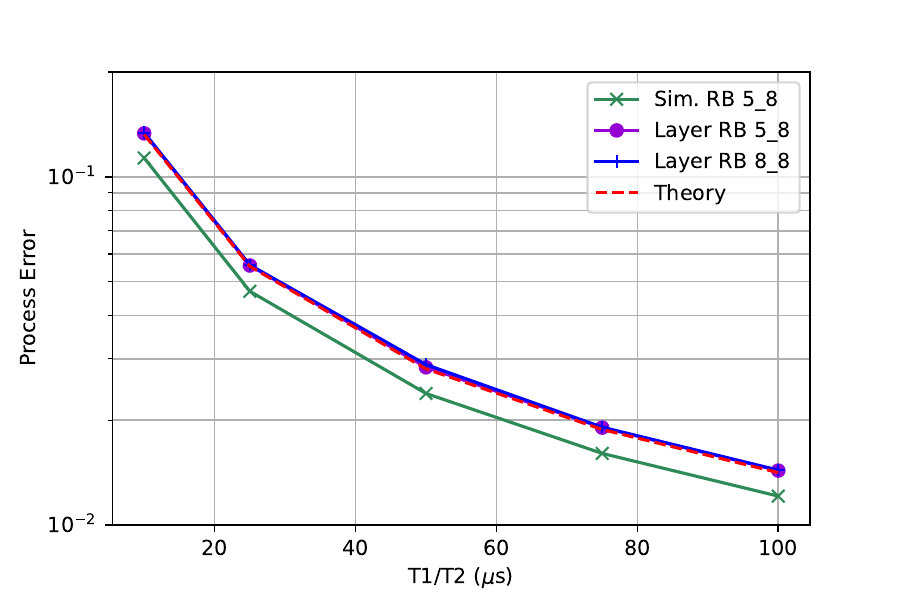}
         \includegraphics[width=\columnwidth]{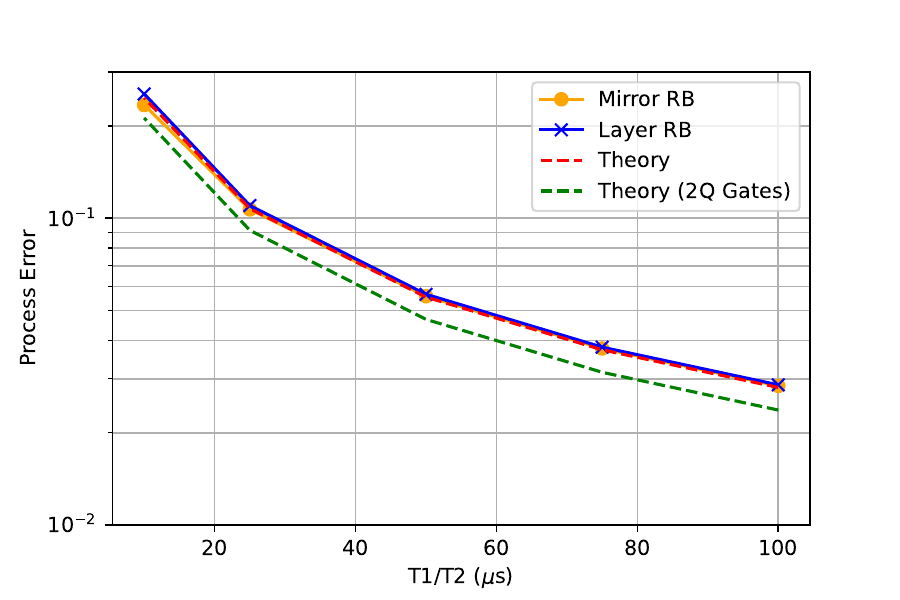}
     \hfill
        \caption{(Top) Simulation of the even layer with incoherent errors ($T_1=T_2$) and the gate unit length of 50~ns. As described in the main text, when the gate lengths are different simultaneous RB trivially gives the wrong answer. (Bottom) Similar simulation for the two layers with incoherent errors ($T_1=T_2$) comparing mirror to layer and the agreement is exact. There are two theory curves in the bottom plot; in the red curve the single qubit gates and the two qubit gate layers are included in calculating the total incoherent error (there are on average 1.5 single qubit gate layers per two-qubit gate layer). The green theory curve is the error if we just consider the two-qubit layer. For agreement with theory, the single qubit gates in the layer must be considered. There are 10 random sequences in each simulation.}
        \label{fig:lf_sim_t1t2}
\end{figure}

In the first set of simulations we only consider incoherent errors and perform the simulation on the even layer of the 4Q set as shown in the top of Fig.~\ref{fig:lf_sim_t1t2}. We consider two different scenarios, one where the two gates in the layer ($CX_{01}$ and $CX_{23}$) are different lengths (5 time units for $CX_{01}$ and 8 time units for $CX_{23}$) and another scenario where both gates are 8 time units. Trivially, simultaneous RB gives the wrong answer for the error of the layer because there are no enforced barriers between the different two-qubit gates. The different layer fidelities are the same because of the barrier. This illustrates how the layer fidelity enforces the layer to be as long as the longest gate for all qubits. The theory agrees well, once we include the 1.5 single qubit gates per layer (so the layer is considered 8+1.5 units in length). The length unit is 50~ns. We then continue the simulation for both layers (with all three gates having length 8) and compare to mirror RB (bottom of Fig.~\ref{fig:lf_sim_t1t2}). For comparison here the mirror layer has a set of random 1Q Cliffords before each layer of two-qubit gates so that the total gate counts are the same in layer and mirror. The agreement between the two is near exact. 

\begin{figure}[ht!]
     \centering
         \includegraphics[width=\columnwidth]{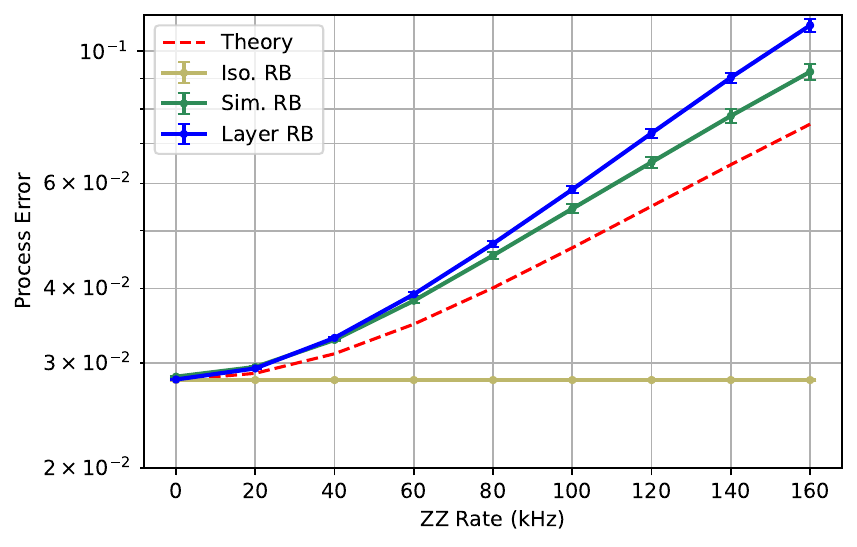}
         \includegraphics[width=\columnwidth]{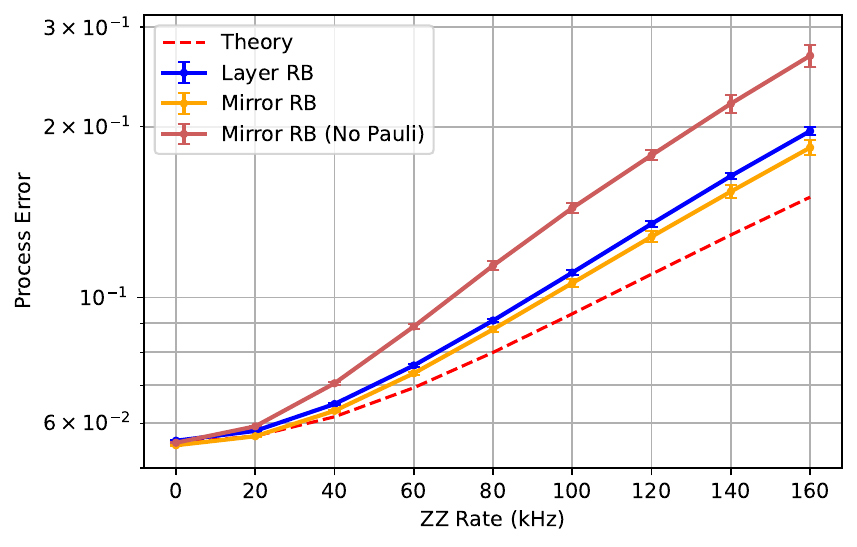}
         \includegraphics[width=\columnwidth]{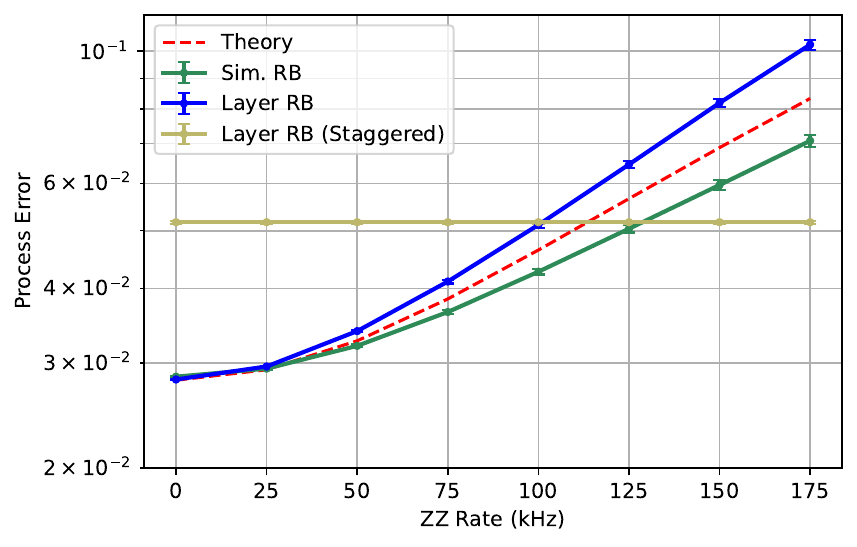}
     \hfill
        \caption{(Top) Simulation of the even layer vs ZZ rate with the same length gate (8 units) comparing isolated RB, simultaneous RB and layer RB. (Middle) Simulation of the full layer vs ZZ rate comparing layer RB, mirror RB and mirror RB without a Pauli layer between mirrors. (Bottom) Simulation of the even layer vs ZZ, where the ZZ is only applied when there are simultaneous 2Q gates. If we stagger the gates then the crosstalk term disappears, but the overall baseline error is higher. There are 30 random sequences in each simulation.}
        \label{fig:lf_sim+zz}
\end{figure}

Next we investigate the more interesting case of coherent crosstalk error. We take $T1=T2=50\mu s$ (same unit time length as before, 50~ns) and vary the $ZZ$ interaction rate, $e^{-i 2\pi \xi_{ZZ} |11\rangle \langle 11|}$, between qubits 0 and 3 ($ZZ_{03}$) and qubits 1 and 2 ($ZZ_{12}$). This error is out of the disjoint subspace.  We consider two versions of this $ZZ$ crosstalk; one version where the $ZZ$ is ``always-on'', and another where it only occurs during simultaneous two-qubit gate operation. All the gate lengths are the same (8 units). 

First, we look at just layer 1 (top Fig.~\ref{fig:lf_sim+zz}) with always-on $ZZ$ and compare isolated RB, simultaneous RB and layer RB. Trivially the isolated RB is not affected by the ZZ interaction, demonstrating that it's a poor method for assessing crosstalk. Simultaneous RB and Layer RB are reasonably similar, with the caveat from Fig.~\ref{fig:lf_sim_t1t2} that if the gate lengths are different simultaneous RB will not reflect the layer properly. 

Next we consider the full layer with always-on $ZZ$ and compare layer RB to mirror RB. We look at two flavors of mirror RB, the first is to exactly mirror the circuit (``no pauli'') and the second is to more faithfully execute the mirror circuit with a random Pauli layer between the  original circuit and its mirror. For this crosstalk the ``no Pauli'' mirror reports a much higher error whereas the layer and mirror (with Pauli) are in very close agreement. 

Finally, we look at layer 1, with $ZZ$ that is only activated by simultaneous 2Q gates, noting that such a crosstalk could be activated by the physics described in Ref.~\cite{wei:2022,mitchell:2021}. Here there is a greater divergence between the layer fidelity and simultaneous RB results because simultaneous RB does not enforce strictly running the 2Q gates at the same time. Furthermore, we see that if we run a layer where the 2Q gates are staggered, this crosstalk term trivially vanishes, although the layer has higher baseline error since it's longer. This elucidates why sometimes it can be beneficially to run non-simultaneous gates for crosstalk as seen in Ref.~\cite{murali2019noise}. This particular version of crosstalk is not necessarily representative, but serves as an example for a family of similar crosstalk terms that activate with simultaneous 2Q gates. We note that there is some ambiguity in calculating the theory curves for these plots, we use Eqn.~\ref{eqn:coh_err}, but since the errors are in the single qubit layer as well we approximate the errors by  assuming the single layers and two qubit layers add.

\begin{figure}[ht!]
     \centering
         \includegraphics[width=\columnwidth]{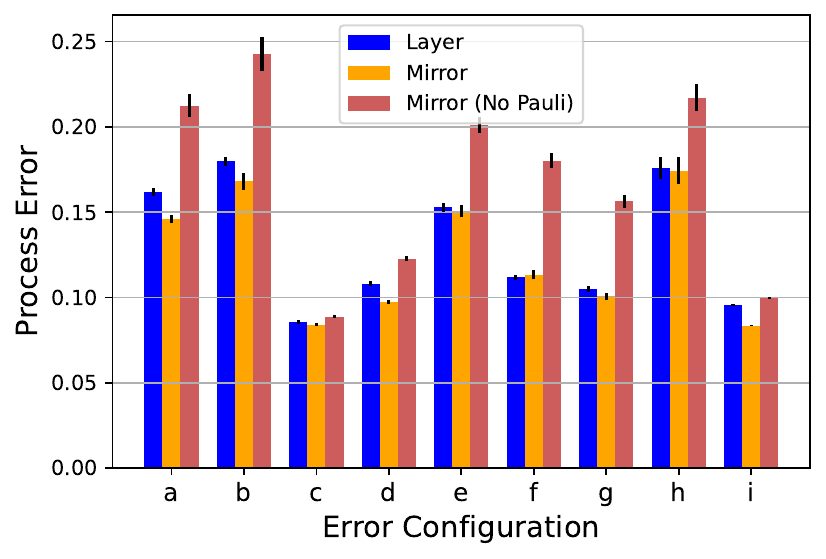}
     \hfill
        \caption{Comparing layer fidelity, mirror RB and mirror RB without a Pauli layer for measuring the process error with several different coherent error scenarios. As in the other simulations $T_1=T_2=50~\mu$s and the unit time is 50~ns. (a) Always on 150~kHz $ZZ$ rate between $0\_1$ and $2\_3$. (b) Always on 150~kHz $ZZ$ rate between $0\_3$ and $1\_2$ (same as Fig.~\ref{fig:lf_sim+zz}). (c) Simultaneous only 150~kHz $ZZ$ rate between $0\_1$ and $2\_3$. (d) Simultaneous only 150~kHz $ZZ$ rate between $0\_3$ and $1\_2$. (e) Always on 100~kHz $ZZ$ rate between $0\_1$, $1\_2$ and $2\_3$ (all the connected qubits). (f) $Z$ error applied after every time slice of 0.02, $e^{-i0.02Z/2}$. (g) 10\% over rotation on all two-qubit gates. (h) 10\% over rotation on all two-qubit gates and a 10\% under rotation on all 1Q gates. (i) Drive crosstalk of 10\% (IY and ZY) from qubit 1 to qubit 2 when applying the CX$_{01}$ gate, from qubit 2 to qubit 1 when applying the CX$_{23}$ gate, and from qubit 1 to qubit 0 when applying the CX$_{12}$ gate. }
        \label{fig:lf_sim_bar}
\end{figure}

Ultimately the comparison we desire is between layer fidelity and mirror fidelity. In the plots we see the two are fairly close, but the space of possible unitary errors is very large.  Therefore, we take a scattershot look at a variety of different error terms and compare between the two methods, as summarized in Fig.~\ref{fig:lf_sim_bar}. The general trend appears is that layer and mirror measure very similar errors, layer fidelity tends to measure slightly higher error than mirror (consistent with the discussion in the next section), whereas without the Pauli layer mirror always measures a larger error. 

\section{Crosstalk and Layer Fidelity \label{sect:xtalk}}

Here we consider the layer fidelity protocol with a single Pauli weight-2 coherent crosstalk term. There are two subspaces $k$ and $j$ that both have $n_k$ and $n_j$ qubits and there is a crosstalk term of the form,
\begin{equation}
U = e^{-i\alpha P_{x}} \approx \mathcal{I} - i\alpha P_{x} - \frac{\alpha^2}{2} \mathcal{I}
\end{equation}
where $P_x$ is a weight-2 Pauli spanning $k$ and $j$ and $\alpha$ is small so we take the small $\alpha$ expansion. The true fidelity of the layer (idles and this crosstalk term) is 
\begin{eqnarray}
F_U & = & \frac{Tr(U)^2}{4^{n_k+n_j}} \\
& \approx & 1-\alpha^2 \label{eqn:trueF}
\end{eqnarray}

Now, what if we do simultaneous RB and are able to twirl $k$ and $j$ (mythically here without additional problems), from the simultaneous paper we know that the decay parameter is $\mathrm{Tr}[\Pi_k R_U]/\mathrm{Tr}[\Pi_k]$ where $\Pi_k$ are the Pauli's just in $k$ (e.g. if $n_k=2,n_j=2$, this would be the 15 $XIII,YIII,\ldots,IXII,\ldots, XXII$). Calculating the PTM terms (remember we only need the on-diagonal terms) and leaving off the $1/2^{n_k+n_j}$ in front of the trace,
\begin{eqnarray}
(R_{U})_{i,i} & = & \mathrm{Tr}[P_i U^{\dagger} P_i U] \\
(R_{U})_{i,i} & = & \mathrm{Tr}[P_i (\mathcal{I}(1-\frac{\alpha^2}{2}) + i\alpha P_{x}) P_i \nonumber \\
& & (\mathcal{I}(1-\frac{\alpha^2}{2}) - i\alpha P_{x}) ] \\
& = & \mathrm{Tr}[\mathcal{I}(1-\frac{\alpha^2}{2})^2 + i\alpha (1-\frac{\alpha^2}{2}) (P_i P_{x} P_i - P_{x})  + \nonumber \\
& & \alpha^2 P_i P_{x} P_i P_{x}) ] \\
& \approx & \mathrm{Tr}[\mathcal{I}(1-\alpha^2) +  \alpha^2 P_i P_{x} P_i P_{x}) ]
\end{eqnarray}
because the middle terms have trace zero. So if $[P_i,P_x]=0$ then the above is 1, and if they don't commute then the above is $1-2\alpha^2$. In the space for the decay parameter of $k$ then, there are $4^{n_k-1/2}$ elements with $1-2\alpha^2$ (because there are 2 Pauli's in $k$ that don't commute with $P_{X}$ which is weight 2 but only has 1 weight in $k$) and the rest are 1. The fidelity in space $k$ is then
\begin{eqnarray}
F_{U,k} & = & 1 - 2\frac{4^{n_k-1/2}}{4^{n_k}}\alpha^2 \\
& = & 1-\alpha^2
\end{eqnarray}
So the estimate is independent of $n_k$ and therefore $F_{U,k}=F_{U,j}$. And then since we multiply the two fidelities together to estimate $F_U$,
\begin{eqnarray}
\tilde{F}_U & = & (1-\alpha^2)^2 \\
& \approx & 1-2\alpha^2
\end{eqnarray}
which is a lower fidelity than the true fidelity Eqn.~\ref{eqn:trueF}.

This analysis also holds by the same arguments for a Pauli stochastic error channel of the form
\begin{align}
    \mathcal{E}(\rho) = (1-\alpha^2)\rho + \alpha^2P_x\rho P_x,
\end{align}
with $P_x$ defined as before. For this error channel we again find that $F_{\mathcal{E}} = 1-\alpha^2$, and $F_{\mathcal{E},k} = F_{\mathcal{E},j} = 1-\alpha^2$, such that the layer fidelity is a lower bound for the true process fidelity $F_{\mathcal{E}}$.

\section{Combining Process Fidelities \label{sect:pf}}

Here we summarize some properties of the process fidelity which have been shown in other sources for reference. The process fidelity is defined as the trace of the superoperators (see, e.g. Ref~\cite{greenbaum:2015} for a summary), in Pauli form, 
\begin{eqnarray}
R_{ij} & = & \frac{\mathrm{Tr}\left[P_i \Lambda[P_j]\right]}{d} \\
F_{p} & = & \frac{\mathrm{Tr}\left[R_{\mathrm{ideal}}^{-1} R\right]}{d^2}
\end{eqnarray}
which is related to the average gate fidelity~\cite{nielsen:2002}
\begin{equation}
    F_{g} = \frac{dF_{p}+1}{d+1}
\end{equation}
where $\epsilon_g = 1-F_{g}$ is the average gate error and is often quoted from the gate error from randomized benchmarking. If there are two disjoint subspaces that have process fidelities $F_{p,0}$ and $F_{p,1}$, then the fidelity of the combined system is $F_{p,0} F_{p,1}$, which is the property we have used to build up each disjoint layer fidelity. It is, however, not true that process fidelities multiply across layers (for simplicity assuming the ideal is the identity),
\begin{eqnarray}
F_{p,q} & = & \frac{\mathrm{Tr}\left[R_p R_q\right]}{d^2} \\
& \ne & \frac{\mathrm{Tr}\left[R_p\right]}{d^2} \times \frac{\mathrm{Tr}\left[R_q\right]}{d^2}
\end{eqnarray}
However this is approximately true for small errors of diagonal maps, which can be shown by a simple expansion. Practically this means that the the fidelity of the layer repeated to multiple depths is fairly well approximated until the fidelity drops below a percent. 

\section{Relating $\gamma$ to $LF$ \label{sect:gamma}}

In this section we relate the $LF$ to $\gamma$ as defined in Ref.~\cite{berg2022probabilistic}. This is a useful metric for error mitigation since it indicates the number of circuit randomizations required to perform probabilistic error mitigation. $\gamma$ is defined for Pauli diagonal noise model, and although we define $LF$ for a depolarizing model, we will do a general comparison here for a Pauli diagonal noise model. As a reminder the definition of $\gamma$ in terms of the PTM elements defined in the above section is,
\begin{eqnarray}
    \gamma & = & e^{2\sum_k \lambda_k} \\
    R_i & = & e^{-2\sum_{\langle k \rangle_i} \lambda_k} \\
    F_p & = & \frac{1 + \sum_{i}^{4^n-1} R_i}{4^n}
\end{eqnarray}
where $\langle k \rangle_i$ is the sum over $k$ where $[P_i,P_k]\ne 0$ and $\lambda_k$ are generators of a Lindblad equation that are small for small errors. In that limit, it's straightforward to expand the exponentials,
\begin{eqnarray}
    R_i & \approx & 1-2\sum_{\langle k \rangle_i} \lambda_k \\ 
    F_p & \approx & \frac{1 + \sum_{i}^{4^n-1} \left(1-2\sum_{\langle k \rangle_i} \lambda_k\right)}{4^n} \\
    & = & 1 - \sum_{k} \lambda_k \\
    & \approx & e^{-\sum_{k} \lambda_k} \\
    & = & \gamma^{-1/2}
\end{eqnarray}
using the fact that $\sum_{i} \sum_{\langle k \rangle_i}=\sum_{k} \sum_{\langle i \rangle_k}$ and $\sum_{\langle i \rangle_k} = 2^{n}$. \\

Next we explore the correspondence of $\gamma$ to more commonly used gate metrics such as the diamond norm or the average gate fidelity with more rigor and provide bounds.  For a Pauli channel, both the average gate fidelity and the diamond norm have simple deviations from the spectral properties of the process matrix $\Lambda$, where $\Lambda = e^{\cal L}$~\cite{magesan:2012}.  In particular, for a Pauli Channel we can write the average gate error and diamond norm as
\begin{align}
\varepsilon(\Lambda) &= \frac{1-\frac{{\rm Tr}(\Lambda)}{d^2}}{1+\frac{1}{d}},\\
\| \Lambda \|_\diamond &= 2 \left(1+\frac{1}{d}\right)\varepsilon(\Lambda) =2 \left(1-\frac{{\rm Tr}(\Lambda)}{d^2} \right) . \\  
\end{align}
For a Pauli channel, we can derive either metric from the arithmetic average of the eigenvalues of $\Lambda$.

Note that the form in Eqn.~\ref{eqn:gamma1} is derived from a Lindbladian generator, ${\cal L}(\rho) = \sum_k \lambda_k (P_k \rho P_k - \rho)$.  This allows us to express $\gamma$ in terms of the spectrum of ${\cal L}$.  That is
\begin{align}
    {\rm Tr}({\cal L}) = -\sum \lambda_k d^2 \implies \gamma = e^{\frac{-2 {\rm Tr}({\cal L}) }{ d^2}}
\end{align}
Through the exponential map we arrive at
\begin{align}
    \gamma = {\rm det}(\Lambda)^{\frac{-2}{d^2}}
    \label{eq:gamma-det}
\end{align}
That is gamma is related to the geometric mean of the eigenvalues of $\Lambda^2$. 

\subsection{depolarizing channels}
For a depolarizing channel, the spectrum of $\Lambda$ is a single $1$ and $(d^2-1)$, $(1-\alpha)$'s.  In this case
\begin{align}
    \frac{{\rm Tr}(\Lambda)}{d^2} & = \frac{1}{d^2} + \left(1 -  \frac{1}{d^2}  \right) (1-\alpha)\\
    {\rm det}(\Lambda)^{\frac{1}{d^2}} &= (1-\alpha)^{1-\frac{1}{d^2}}
\end{align}
In the limit of large $d$ these both converge to $(1-\alpha)$.  In terms of this depolarizing parameter $\alpha$ we have,
\begin{align}
\varepsilon(\Lambda) &\approx \frac{\alpha}{1+\frac{1}{d}},\\
\| \Lambda \|_\diamond &\approx 2 \alpha ,\\
\gamma &\approx (1-\alpha)^{-2}. 
\end{align}
Alternatively, we can express the gate fidelity and diamond norm in terms of gamma as
\begin{align}
\varepsilon(\Lambda) &\approx \frac{1-1/\sqrt{\gamma}}{1+\frac{1}{d}},\\
\| \Lambda \|_\diamond &\approx 2 (1-1/\sqrt{\gamma}).
\end{align}

\subsection{The small error limit}
Let's assume $\Lambda$ is very close to the identity, i.e., the spectrum contains terms $1-\epsilon_j$.  Let's define $\bar{\epsilon} \equiv \frac{1}{d^2} \sum_j \epsilon_j$.
\begin{align}
    \frac{{\rm Tr}(\Lambda)}{d^2} & = 1-\bar{\epsilon}\\
    {\rm det}(\Lambda)^{\frac{1}{d^2}} &= \prod_j (1-\epsilon_j)^{\frac{1}{d^2}} = 1- \bar{\epsilon} + {\cal O}(\epsilon^2)
\end{align}
Once again we are in the limit where the arithmetic and geometric means are the same, which again yields
\begin{align}
\varepsilon(\Lambda) &\approx \frac{1-1/\sqrt{\gamma}}{1+\frac{1}{d}},\\
\| \Lambda \|_\diamond &\approx 2 (1-1/\sqrt{\gamma}).
\end{align}

\subsection{Bounds}

The process fidelity of a superoperator is the arithmetic mean of its eigenvalues.
On the other hand, Eqn.~\ref{eq:gamma-det} established that $\gamma^{-\frac12}$ is equal to the geometric mean of the eigenvalues.
To make $F_p$ and $\gamma$ more easily comparible this section chooses to work in terms of $\gamma^{-\frac12}$

We start with Theorem~\ref{thm:bounds} which provides upper and lower bounds for $\gamma^{-\frac12}$ in terms of $F_p$.
Although the lower bound appears complicated, it is extremely close to $\sqrt{2F_p-1}$ on all of $[\frac{1}{2}, 1]$, which can therefore be used as a proxy for most practical purposes.
Especially note that both the upper and lower bounds are independent of the dimension $d=2^n$.
Following the theorem, we provide natural families of channels that saturate the upper bound, and nearly saturate the lower bound.
For high fidelity layers, say above $F_p=0.9$, it will be seen that $F_p\approx \gamma^{-1/2}$.

\begin{theorem}
Suppose $\Lambda$ is a CPTP Pauli channel with a process fidelity $F_p=\tr\Lambda/d^2$, and  $\gamma = \det(\Lambda)^{-2/d^2}$.
Then it holds that 
\begin{align}
    F_p - 1 + 2\lambda_0(1-F_p)+(2F_p-1)^{\lambda_0} \leq \gamma^{-\frac{1}{2}} \leq F_p
\end{align}
where 
\begin{align}
    \lambda_0
    = \frac{\log(2-2F_p)-\log(-\log(2F_p-1))}{\log(2F_p-1)}.
\end{align}
\label{thm:bounds}
\end{theorem}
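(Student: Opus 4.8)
The plan is to read both sides of the inequality as statements about the spectrum of the Pauli transfer matrix of $\Lambda$. As established just above, $F_p=\tr\Lambda/d^2$ is the arithmetic mean of the $d^2$ eigenvalues $\{R_j\}$, while $\gamma^{-\frac12}=\det(\Lambda)^{1/d^2}$ is their geometric mean. The upper bound is then nothing but the arithmetic--geometric mean inequality $\mathrm{GM}\le\mathrm{AM}$, which needs no further work. Everything nontrivial is in the lower bound, and my strategy is to (i) confine all eigenvalues to an explicit, dimension-independent interval, (ii) apply a sharp secant estimate for the concave logarithm, and (iii) reconcile the resulting clean bound with the stated expression by a short convexity argument.

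For step (i) I would write the channel as $\Lambda(\rho)=\sum_a p_a P_a\rho P_a$ with $p_a\ge 0$ and $\sum_a p_a=1$, so that each eigenvalue is $R_j=1-2\sum_{a:\,[P_a,P_j]\neq 0}p_a$. Two facts drop out. First, $F_p$ equals the identity weight $p_I$ (the non-identity eigenvalues average over anticommuting weights, and each non-identity Pauli anticommutes with exactly half of all $d^2$ Paulis). Second, the anticommuting weight attached to any $P_j$ is at most the total error weight $1-p_I=1-F_p$. Hence $2F_p-1\le R_j\le 1$ for every $j$, independently of $d$; this is the single place where complete positivity enters, and it is what makes the bound dimension-free.

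For steps (ii)--(iii), since $\log$ is concave it lies above its secant on $[2F_p-1,1]$; because the secant is affine and the $R_j$ average to $F_p$, averaging the pointwise estimate gives $\frac{1}{d^2}\sum_j\log R_j\ge \tfrac12\log(2F_p-1)$, i.e. $\gamma^{-\frac12}=\mathrm{GM}\ge\sqrt{2F_p-1}$. It remains to observe that the stated lower bound is $h(\lambda_0)$ for $h(\lambda)=(2F_p-1)^{\lambda}+2\lambda(1-F_p)-(1-F_p)$: this $h$ is strictly convex, its stationarity condition $h'(\lambda_0)=0$ rearranges exactly to the displayed formula for $\lambda_0$, and $h(1/2)=\sqrt{2F_p-1}$. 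Therefore $h(\lambda_0)=\min_\lambda h(\lambda)\le h(1/2)=\sqrt{2F_p-1}\le\gamma^{-\frac12}$, which is the theorem. The ``half-and-half'' channel (uniform noise over an index-two stabilizer subgroup), with spectrum $d^2/2$ ones and $d^2/2$ copies of $2F_p-1$, attains $\sqrt{2F_p-1}$ exactly, which is why the stated bound is only nearly saturated.

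The conceptual obstacle is entirely in step (i): without the lower envelope $R_j\ge 2F_p-1$, the geometric mean can be driven to zero at fixed $F_p$ as $d$ grows, so any dimension-independent bound must extract this positivity constraint. Once it is in hand the remaining steps are routine, and I would only need to verify the $\lambda_0$ algebra and that $h$ increases away from its minimum. I would also flag that attempting to prove $h(\lambda_0)$ as the \emph{exact} extremal value---rather than deducing it from the cleaner $\sqrt{2F_p-1}$---would force a direct minimization of the geometric mean over the Pauli-channel polytope with its combinatorial positivity facets; this is both harder and, since the genuine minimum is the strictly larger $\sqrt{2F_p-1}$, unnecessary. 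The localization-plus-secant route is the shortcut that sidesteps it.
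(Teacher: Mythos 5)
Your proof is correct, and while step (i) coincides with the paper's, the analytic core takes a genuinely different route. Like the paper, you expand $\Lambda$ in Pauli--Kraus form, use $p_I=F_p$ and $p_b\geq 0$ to localize every eigenvalue (Pauli fidelity) of the transfer matrix to $[2F_p-1,1]$; this is exactly the paper's argument (repeated from Ref.~\cite{erhard:2019}), and you rightly flag it as the only place complete positivity enters. From there the paper invokes Lemma~\ref{lem:gap}: the arithmetic-minus-geometric mean is convex, so its maximum over the box $[2F_p-1,1]^{d^2}$ sits at a vertex, and relaxing the vertex count to a continuous $\lambda$ yields the bound evaluated at $\lambda_0$ (note the paper's proof as printed says ``$c=F_p-1$'', evidently a typo for $c=2F_p-1$, and the lemma's $f$ should read $\lambda c + (1-\lambda)d - c^\lambda d^{1-\lambda}$). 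Crucially, the paper's lemma does \emph{not} use the constraint that the eigenvalues average to $F_p$ --- the maximizing vertex generally has a different arithmetic mean --- which is precisely the source of the small slack visible in the inset of Fig.~\ref{fig:gamma-bounds}. Your secant argument exploits the pinned mean: averaging the chord inequality for the concave $\log$ on $[2F_p-1,1]$ against $\frac{1}{d^2}\sum_j R_j = F_p$ gives the cleaner and strictly \emph{stronger} bound $\gamma^{-1/2}\geq\sqrt{2F_p-1}$, which is attained exactly by the single-Pauli channel $\Lambda(\rho)=p\rho+(1-p)P\rho P$ (half the fidelities equal $1$, half equal $2p-1$), so your constant is optimal. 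Your reconciliation step is also sound: $h(\lambda)=(2F_p-1)^{\lambda}+2\lambda(1-F_p)-(1-F_p)$ is strictly convex, $h'(\lambda_0)=0$ rearranges to the displayed formula for $\lambda_0$, and $h(1/2)=\sqrt{2F_p-1}$, so $h(\lambda_0)\leq\sqrt{2F_p-1}\leq\gamma^{-1/2}$, recovering Theorem~\ref{thm:bounds} as stated. What each approach buys: the paper's lemma is a standalone, reusable estimate for AM--GM gaps over arbitrary hyperrectangles, while your route is shorter, exploits the mean constraint, and effectively upgrades the paper's remark that its bound ``is extremely close to $\sqrt{2F_p-1}$'' into a theorem, since $\sqrt{2F_p-1}$ is itself a valid and tight lower bound. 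Two cosmetic caveats: your description of the extremal channel as ``uniform noise over an index-two stabilizer subgroup'' is loose --- the extremizer is the two-term channel above, uniform only at $p=1/2$ --- though existence of a saturating channel is all you need; and both proofs implicitly assume $F_p>1/2$ so that $\log(2F_p-1)$ is defined and all eigenvalues are positive.
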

\begin{proof}
The upper bound on $\gamma^{-\frac12}$ follows directly from a standard application of Jensen's inequality; the geometric mean of positive numbers cannot exceed their arithmetic mean.

To show the lower bound, first observe that all of the Pauli fidelities $f_a$ of $\Lambda$ lie in the interval $[2F_p-1, 1]$.
This was shown in Ref.~\cite{erhard:2019}, but we repeat the brief argument here for completeness.
We can express the Pauli fidelity $f_a$ in terms of the Kraus probabilities as
\begin{align}
    f_a &= \sum_b (-1)^{\langle a, b\rangle} p_b
    = \sum_{b: \langle a, b\rangle=0} p_b - \sum_{b: \langle a, b\rangle\neq 0} p_b \nonumber \\
    &= 2 \sum_{b: \langle a, b\rangle=0} p_b - 1,
\end{align}
where we have used the CPTP condition $\sum_b p_b=1$ to write the sum of those $p_b$ where $b$ does not commute with $a$ as one minus the sum of those that do.
Now clearly $\sum_{b: \langle a, b\rangle=0} p_b \geq p_I$, and moreover $p_I=4^{-n} \sum_a (-1)^{\langle I, a\rangle}f_a=F_p$, hence
\begin{align}
    f_a \geq 2 F_p -1.
\end{align}

We can now apply Lemma~\ref{lem:gap} (below) with $c=F_p-1$ and $d=1$ to get the stated inequality.
\end{proof}

The geometric mean and arithmetic mean agree exactly when their arguments are equal, which means that $F_p=\gamma^{-\frac12}$ exactly when all of the Pauli fidelities are equal, which, for TP channels, only happens with the identity channel.
However, globally depolarizing channels are the next best thing as all values but one are equal.
For a globally depolarizing channel with non-trivial Pauli fidelities $\alpha$, we have
\begin{align}
    F_p = \frac{1 + (4^n-1)\alpha}{4^n}
    \quad\text{and}\quad
    \gamma^{-\frac12} = \alpha^{(4^n-1)/4^n},
\end{align}
which are very close to equal even for moderate $n$; see the upper curve in Fig.~\ref{fig:gamma-bounds}.

The next family of channels we consider are tensor products of two-qubit depolarizing channels, each with strength $\alpha$.
Assuming $n$ is even and we have the tensor product of $n/2$ such channels, we get
\begin{align}
    F_p = \sum_{k=0}^{n/2} \alpha^k 15^k \binom{n/2}{k}/4^n
    \quad\text{and}\quad
    \gamma^{-\frac12} = \alpha^{15n/32},
\end{align}
which, as with global depolarizing channels (see Fig.~\ref{fig:gamma-bounds}), are very close to equal even for moderate $n$.

The previous two families of channels have had Pauli fidelites that are tightly concentrated.
To saturate the lower bound of Theorem~\ref{thm:bounds}, we will need to instead choose a channel that maximizes the variance of the Pauli fidelities.
As seen in the proof of Lemma~\ref{lem:gap}, this is done by making the Pauli fidelities strongly bimodal, concentrating roughly half of them at 1, and the other half at another value less than 1.
This can be done by choosing a $\Lambda$ to have a Kraus map with only one non-trivial Pauli, 
\begin{align}
    \Lambda(\rho) = p\rho + (1-p) P\rho P.
\end{align}
In this case, $f_a=1$ whenever $a$ commutes with $P$, but $2p-1$ otherwise.
In this case we get
\begin{align}
    F_p = p
    \quad\text{and}\quad
    \gamma^{-\frac12}= \sqrt{2p-1},
\end{align}
where we note the independence of $n$.
A physically relevant example of such a noise model is one where a single subsystem has a high error that dominates every other error, for example, by taking $P=XIII\cdots I$, where the first qubit is faulty.
The geometric mean (viz. $\gamma$) is good at capturing this outlier, but the arithmetic mean (viz. $F_p$) is not.

\begin{figure}[ht!]
    \centering
    \includegraphics[width=\columnwidth]{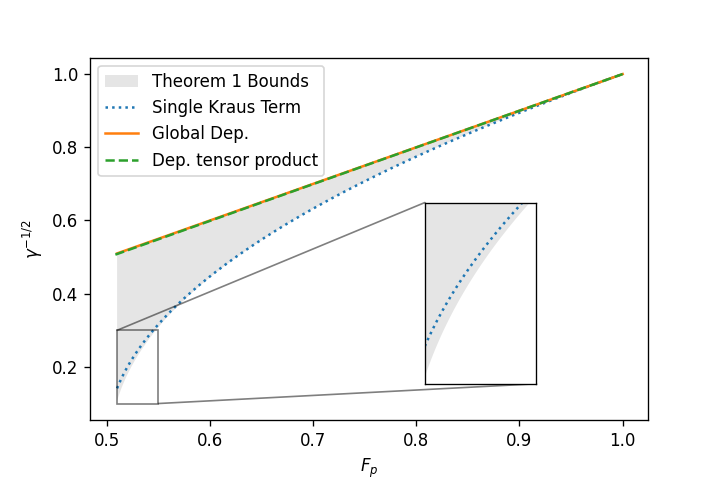}
    \caption{For any fixed process fidelity on the x-axis, the grey region above it represents the range of values of $\gamma^{-\frac12}$ that are physically consistent by some Pauli channel with that process fidelity.
    The three curves depict three families of Pauli channels that saturate the bounds of the grey region, two on the top and one on the bottom.
    The global depolarizing curve (orange) is shown for $n=10$, but would look identical for any other $n$ not too close to $1$.
    Likewise, the 5-tensor product of 2-qubit depolarizing (dashed green) curve also corresponds to $n=10$, but would look similar for more qubits.
    The lower bound is (just about; see inset) saturated by channels for which there is only one non-trivial term in the Kraus representation, representative of noise models where there is a single outlying subsystem, such as $\Lambda(\rho)=p\rho + (1-p)XIIII\rho XIIII$ (dotted blue).
    }
    \label{fig:gamma-bounds}
% import numpy as np
% import matplotlib.pyplot as plt
% import scipy.special as sp

% xs = np.linspace(0.51,0.9999999, 1000)

% es = (np.log(2 - 2*xs) - np.log(-np.log(2*xs-1))) / np.log(2*xs-1)
% plt.fill_between(xs, xs-1+2*es*(1-xs)+(2*xs-1)**es, xs, alpha=0.1, color="k", edgecolor=None, label="Theorem 1 Bounds")
% plt.plot(xs, np.sqrt(2*xs -1), ls=":", label="Single Kraus Term")

% n = 10
% ys = (4**n * xs / (4**n - 1))**((4**n - 1)/4**n)
% plt.plot(xs, ys, label="Global Dep.")

% def binomln(n, k): # logarithm of n-choose-k
%     return -sp.betaln(1 + n - k, 1 + k) - np.log(n + 1)

% n = 10
% aa = np.linspace(0.865, 1, 100)
% ys = aa ** (15*n/32)
% xs = np.zeros(len(aa))
% for idx, a in enumerate(aa):
%     for k in range(n//2+1):
%         # for numerical stability, we sum the logs and then exp
%         v = sp.xlogy(k, a) + sp.xlogy(k, 15) - sp.xlogy(n, 4) + binomln(n//2, k)
%         xs[idx] += np.exp(v)
% plt.plot(xs, ys, ls="--", label="Dep. tensor product")

% # begin inset
% ax = plt.gca()
% x1, x2, y1, y2 = 0.51, 0.55, 0.1, 0.3  # subregion of the original image
% axins = ax.inset_axes(
%     [0.6, 0.1, 0.2, 0.5],
%     xlim=(x1, x2), ylim=(y1, y2), xticklabels=[], yticklabels=[])
% xs = np.linspace(0.51, x2, 1000)
% es = (np.log(2 - 2*xs) - np.log(-np.log(2*xs-1))) / np.log(2*xs-1)
% axins.fill_between(xs, xs-1+2*es*(1-xs)+(2*xs-1)**es, xs, alpha=0.1, color="k", edgecolor=None)
% axins.plot(xs, np.sqrt(2*xs -1), ls=":")
% ax.indicate_inset_zoom(axins, edgecolor="black")
% axins.set_xticks([])
% axins.set_yticks([])
% # end inset

% plt.xlabel("$F_p$")
% plt.ylabel("$\gamma^{-1/2}$")
% plt.legend(loc=2)
\end{figure}

\begin{lemma}
Suppose that $0<c<d$ are real numbers and fix a positive integer $N$. Define $a:[0,\infty)^N\rightarrow \mathbb{R}$ by $a(x)=\sum_i x_i/N$ and $g:[0,\infty)^N\rightarrow \mathbb{R}$ by $g(x)=\prod_i x_i^{1/N}$.
Then restricting to the hyperrectangular region $D=[c,d]^N$, we have
\begin{align}
    \max_{x\in D} (a(x) - g(x))
    \leq f(\lambda_0)
\end{align}
where $f(\lambda)=\lambda c + (1-\lambda)d + c^\lambda d^{1-\lambda}$
and 
\begin{align}
    \lambda_0 = \frac{\log(\log(d/c))-\log((d-c)/d)}{\log(d/c)}.
\end{align}
\label{lem:gap}
\end{lemma}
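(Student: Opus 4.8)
The plan is to reduce this continuous optimization over the box $D=[c,d]^N$ to a one-dimensional calculus problem. The key structural observation is that the objective $h(x)=a(x)-g(x)$ is convex in each coordinate separately: the arithmetic mean $a$ is affine, while fixing all but the $i$-th coordinate turns the geometric mean into $g=K\,x_i^{1/N}$ with $K=\prod_{j\neq i}x_j^{1/N}\ge 0$, and since $0<1/N\le 1$ the map $x_i\mapsto x_i^{1/N}$ is concave, so $-g$ is convex in $x_i$. First I would use this coordinatewise convexity to show the maximum of $h$ over $D$ is attained at a vertex of the box, i.e.\ at a point all of whose coordinates equal $c$ or $d$. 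The argument is the standard one-coordinate-at-a-time sweep: holding the others fixed, a convex function of a single variable on $[c,d]$ is maximized at an endpoint, so any maximizer can be pushed to a vertex without decreasing $h$.

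Next I would exploit the permutation symmetry of both $a$ and $g$: at a vertex the value of $h$ depends only on the number $k$ of coordinates equal to $c$ (the remaining $N-k$ equalling $d$). Writing $\mu=k/N$, a direct substitution gives $h=\mu c+(1-\mu)d-c^{\mu}d^{1-\mu}$, which is exactly $f(\mu)$ (the weighted arithmetic-minus-geometric-mean gap of $c$ and $d$) evaluated at the lattice point $\mu\in\{0,1/N,\dots,1\}$. Hence $\max_{x\in D}(a(x)-g(x))=\max_{0\le k\le N}f(k/N)$. Relaxing the discrete constraint can only enlarge the feasible set, so $\max_{x\in D}(a(x)-g(x))\le \max_{\mu\in[0,1]}f(\mu)$; this relaxation is precisely what converts the attained discrete maximum into the stated inequality.

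It then remains to maximize the smooth function $f(\mu)=\mu c+(1-\mu)d-c^{\mu}d^{1-\mu}$ over $[0,1]$. Differentiating gives $f'(\mu)=c-d-c^{\mu}d^{1-\mu}\log(c/d)$ and $f''(\mu)=-c^{\mu}d^{1-\mu}\bigl(\log(c/d)\bigr)^2<0$, so $f$ is strictly concave; together with $f(0)=f(1)=0$ and $f>0$ on the interior (weighted AM--GM, strict since $c<d$) this guarantees a unique interior maximizer. Setting $f'(\mu)=0$ yields $(c/d)^{\mu}=\dfrac{(d-c)/d}{\log(d/c)}$, and taking logarithms (both $(c-d)/d$ and $\log(c/d)$ are negative, so the argument is a positive number less than $1$) solves to $\mu=\lambda_0$ with $\lambda_0$ as stated. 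Substituting back gives $\max_{\mu\in[0,1]}f(\mu)=f(\lambda_0)$, completing the bound.

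I expect the genuinely delicate points to be bookkeeping rather than conceptual. The part requiring the most care is making the vertex-reduction step airtight — in particular justifying that coordinatewise convexity (not joint convexity) already forces the maximum onto a corner — and then tracking the signs of $\log(c/d)$ and $(d-c)/d$ when taking logarithms so that the closed form for $\lambda_0$ comes out with the correct sign and indeed lies in $(0,1)$. Verifying $f'(0)>0>f'(1)$ (equivalently $r-1-\log r>0$ and $r-1-r\log r<0$ for $r=c/d<1$) confirms the maximizer is interior and legitimizes the stationarity computation.
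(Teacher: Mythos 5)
Your proof is correct and follows essentially the same route as the paper's: reduce the maximization of the convex gap $a-g$ over the box to its vertices, use permutation symmetry to collapse the vertex values to $f(k/N)$, relax $k/N$ to $\lambda\in[0,1]$, and maximize the concave one-variable function by calculus at $\lambda_0$. The only differences are minor: the paper obtains the vertex reduction from joint convexity of $a-g$ (maximum of a convex function on a convex set is attained at extreme points) rather than your coordinatewise-convexity sweep, you additionally verify $f'(0)>0>f'(1)$ so the maximizer is interior (the paper leaves this to ``standard calculus''), and you work with $f(\lambda)=\lambda c+(1-\lambda)d-c^{\lambda}d^{1-\lambda}$, implicitly correcting a sign typo in the paper's statement and proof (the stated $\lambda_0$ is indeed the stationary point of the minus-sign version, as its use in Theorem~\ref{thm:bounds} confirms).
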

\begin{proof}
Since $a$ is linear and $g$ is concave, $a-g$ must be convex.
Therefore, the maximum of $a-g$ is acheived on the extreme points of the convex set $D$, which are given by
 $E=\{a,b\}^N$.
 That is, we have $\max_{x\in D} (a-g)(x)=\max_{e\in E} (a-g)(e)$.
 
Now, for any $e\in E$, there exists some $0\leq m \leq N$ such that $(a-g)(e)=(mc+(N-m)d)/N + (c^m d^{N-m})^{1/N}$.
Therefore, $\max_{e\in E} (a-g)(e) \leq \max_{0\leq \lambda \leq 1} f(\lambda)$.
Standard calculus show that $f$ is concave on $[0,1]$ and acheives a maximum value at $\lambda_0$, which proves the inequality of this lemma.
\end{proof}

\end{appendix}

\end{document}